\newcommand{\sch}{\mathop{\mathrm{sch}}}
\newcommand{\nullname}{\texttt{null}}
\newcommand{\literaljoin}{\widehat{\Join}}
\newcommand*{\card}[1]{\ensuremath \mid #1 \mid}
\newtheoremstyle{named}{}{}{\itshape}{}{\bfseries}{.}{.5em}{#1 #3}
\theoremstyle{named}
\newtheorem*{condition}{Condition}
\begin{document}

\title[Functional dependencies 
  with \nullname{} markers]{Functional dependencies 
  with \nullname{} markers}
\shortauthors{A. Badia and D. Lemire}

\author{Antonio
  Badia} \email{antonio.badia@louisville.edu} \affiliation{CECS
  Department, University of Louisville, Louisville KY 40292, USA}\address{} 
\author{Daniel Lemire} \affiliation{LICEF, Universit\'e du Qu\'ebec, 
 5800 Saint-Denis, Montreal, QC, H2S 3L5 Canada}

\keywords{Functional Dependencies; Database Design; Missing Information }

\begin{abstract}
Functional dependencies are an integral part of database
design. However, they are  only defined when we exclude \nullname{}
markers. Yet we commonly use \nullname{} markers in practice. 
To bridge this gap between theory and practice, researchers have
proposed  definitions of functional dependencies over relations  
with \nullname{} markers. Though sound, these definitions lack some
qualities that we find desirable.
For example, some fail to satisfy Armstrong's axioms---while these axioms are part of the
foundation of common database methodologies. 
We propose a set of properties that any extension of functional 
dependencies over relations with \nullname{} markers should possess. We then propose two new
extensions having these properties. 
These extensions attempt to allow \nullname{} markers where they 
make sense to practitioners.
 They both support Armstrong's axioms and
provide \emph{realizable}  \nullname{} markers: at any time, some or
all of the  \nullname{} markers can be replaced by actual values
without causing an anomaly.
Our proposals may improve database designs.
\end{abstract}

\maketitle
\section{Introduction}\label{sec:intro}   
Functional dependencies (hereafter, FDs) are the basis of good
relational database design.    Database designers use FDs  to define
and enforce consistency; for example, if each user account has only one
  corresponding primary email address, we say that the user account
  determines the email address ($\textrm{user-account} \to \textrm{email}$).
   Though database
  designers may not always work directly with FDs, almost all of them
  work with normal forms which exist to support FDs. 
     
Meanwhile,  relational databases use
\nullname{} markers to cope with incomplete information. Indeed, Codd introduced
\nullname{} markers in the relational model along with a 3-value
logic: a comparison between a value and a \nullname{} marker has an
unknown truth value~\cite{Codd:1986:MIR:16301.16303}. 
A \nullname{} marker can indicate anything from an applicable but
unknown value to a non-applicable attribute; it may  even indicate that we have 
\emph{no
  information}~\cite{Zaniolo1984142,Atzeni:1984:FDR:1147.1160,Hartmann:2012:IPD:2188349.2188355}.  We
 refer the reader to Libkin~\cite{libkin1994aspects} for an early
 survey on incomplete information within databases.

Unfortunately, the concept of functional dependence, even though central to
relational database design, is not part of the SQL standard.  Hence,
FDs are not \emph{directly} supported by relational
databases. Admittedly, given the procedural extensions of the SQL
standard, SQL is computationally complete and therefore can enforce
FDs  using checks, assertions or triggers. However, the problem is
more fundamental: FDs are not \emph{defined} in the presence of
\nullname{} markers. Thus, there is no clear
semantics to enforce when we have FDs and \nullname{} markers.

We believe  that the fact that FDs are not defined in
the presence of \nullname{} markers undermines the role of FDs in 
database design~\cite{Badia:2011:CAR:2070736.2070750}. While there are  proposed definitions of FDs in
relations with \nullname{} markers,  such proposals may not be  practical.

Our work is organized as follows. In the next section, we argue for a
minimal set of desirable properties that FDs in relations with
\nullname{} markers should have  to be of practical use. After briefly
introducing our formal notation in \S~\ref{sec:basic},  
 we review two existing extensions of FDs to \nullname{} markers in
 \S~\ref{sec:strongandweak}: weak and strong FDs. We show that neither
 extension is satisfactory in our context, as they do not have the
 desired properties.   
We introduce two  novel extensions in \S~\ref{sec:litandsr}: literal and
super-reflexive functional dependencies, and show that they that
possess our properties. In \S~\ref{sec:comparison} we compare our
 extensions to weak and strong FDs, in order to give the reader a
 context for interpretation. Next in \S~\ref{sec:complexity} we
 address the issue of whether the newly proposed concepts can be
 efficiently supported. In \S~\ref{sec:enforcing}, we show how to
 extend logical database design to include \nullname{}
 markers using one of the proposed extensions. Finally, we close
 in  \S~\ref{sec:conclusion} with some  comments and a discussion of
 further research.  
 
\section{Desirable properties}
\label{sec:motivation}

The SQL standard allows \nullname{} markers, but does not allude to
 FDs~\cite{sql-standard-2008}.  
 One could point out that the concept of \emph{key} is explicitly present in
the SQL standard; given that the concepts of key and FD are related,
 it would seem that we support FDs by enforcing 
the \emph{key constraint}~\cite{Hartmann01072011}.
 However, when we enforce  FDs by key constraints, we  assume that all
 the tables in the database have attained Boyce-Codd normal form (BCNF). 
Yet such a normal form excludes the commonly used \nullname{} markers.
 
Codd was well aware of the perceived problems  with \nullname{}
markers. Yet  he was unconcerned by our
inability to enforce FDs in the presence of \nullname{} markers. He
believed that only when the \nullname{} markers where replaced by
actual values would concepts such as keys, normalization and FDs be
applicable:
\begin{quote}
It should be clear that, because nulls (or, as they are now called,
marks) are NOT database values, the rules of functional dependence---and of multi-valued dependence---do not apply to
them. (E. F. Codd~\cite{Codd:1986:MIR:16301.16303}) 
\end{quote}
 However, consider the relation in Table~\ref{FirstTable}
subject to the FDs {$\text{professor} \to \text{chair}$} and 
  {$\text{department} \to \text{chair}$}. Given that Jill and Arthur are distinct
people, it is  not possible to replace the \nullname{} marker  by an
actual value.  Going back to Codd's vision, we have that keys,
normalization and FDs are never going to be applicable to all tuples of such relation. This might reasonably be
considered anomalous.  

 To avoid such problems, we establish as a first goal that if a
 FD is enforced in a relation with \nullname{} markers, it should
 always be  possible to replace some or all of the  \nullname{}
 markers by actual values without violating the FD ({\bf G1}).  In
 such a case,  we say that such  FDs are \emph{realizable} under null
 markers. Moreover, and since  we consider \emph{sets} of FDs
 for design (as opposed to single FDs in isolation), we posit that FDs
 should be defined so that if a \emph{set} of FDs is
 enforced in a relation with \nullname{} markers, it should be
 possible to replace some or all of the  \nullname{} 
 markers 
 without violating any of the
 FDs in the set ({\bf strong G1}).

\begin{table}
\centering
\caption{\label{FirstTable} Example relation with attributes
  professor, chair and department}   
\begin{tabular}{ccc}\toprule
professor & chair & department\\\midrule
Joe     & \nullname{}  & Mathematics\\
Joe     & Jill  & Computer Science\\
Bill    & Arthur& Mathematics \\\bottomrule
\end{tabular}
\end{table}

When defining
FDs in the presence of \nullname{} markers, we are also interested in
Armstrong's three axioms, especially transitivity:
\begin{enumerate*}
\item Reflexivity: If $Y\subseteq X$, then $X\to Y$. 
\item  Augmentation:  we have that  $X\to Y$ implies  $XZ\to
YZ$. 
\item Transitivity:  If $X \to Y$ and $Y \to Z$, then $X\to Z$. 
\end{enumerate*}
E.g., we might say that your social security number (SSN) determines
your income ($\text{SSN} \to \text{Income}$), and that your income determines your
tax bracket ($\text{Income} \to \text{Taxation}$). If transitivity holds, then your
SSN determines your tax bracket ($\text{SSN} \to \text{Taxation}$).

\begin{table}
\centering
\caption{\label{SecondTable} Example relation with attributes
  SSN, income and taxation}
\begin{tabular}{ccc}\toprule
SSN & income & taxation\\\midrule
1112233     & \nullname{}  & 15\%\\
1112233    & \nullname{}  & 25\%\\\bottomrule
\end{tabular}
\end{table}

Codd's interpretation, that FD do not apply when there are
\nullname{} markers  fails to enforce transitivity in the following
sense: given the  FD $\textrm{SSN} \to \textrm{Income}$ and
$\textrm{Income}\to \textrm{Taxation}$, both tuples in
Table~\ref{SecondTable} are allowable, even though one would expect
not to see such data in the database. It implies that
$\textrm{SSN} \to \textrm{Taxation}$ does not hold (whereas it should
under transitivity) even though there is no \nullname{} marker over
attributes $\textrm{SSN}$ and $\textrm{Taxation}$.  

Codd would no doubt reply that there is no violation of transitivity
since FDs do not apply in the presence \nullname{} markers. But we
wish to consider \nullname{} markers as an integral part of the
database. 

Failing to enforce Armstrong's axioms has significant 
consequences. For one thing, without these axioms, normalization is no
longer sufficient to enforce FDs. In practical terms, any redefinition
of the FDs that fails to satisfy  Armstrong's axioms cannot be
enforced through normalization.  Indeed, given a database $D$ and a
set of FDs  ${\cal F}$ on $D$, before we can use ${\cal F}$ to 
determine normal forms for the relations in $D$, we need to make sure
that ${\cal F}$ is in minimal or canonical
form~\cite{bernstein1976synthesizing}. 
But minimizing ${\cal F}$ depends \emph{crucially} on FDs
respecting transitivity, as covered in standard database textbooks.

We might be willing to forgo normalization and enforce FDs
through other means. In such a case, it might seem like Armstrong's
axioms are no longer required. For example, we might think that it is
possible to build a database design without assuming that FDs are
transitive. However, such watered-down FDs might be impractical for
other reasons.  The first problem that we encounter is that standard
database design methodologies, like the entity-relationship model,
implicitly assume Armstrong's axioms and transitivity in 
particular~\cite{Chen:1976:EMU:320434.320440}.

We believe that database designers would have a hard
time coping with the lack of transitivity (see, for instance, the
example of Table~\ref{SecondTable}, where intuitively one would expect
to see, for the same SSN, the same Taxation), and hence we require
that Armstrong's axioms hold, even though such a requirement could be
seen as too strong. Of course,  we could help designers 
with additional tools and methodologies~\cite{VanBaoTran2013} to
compensate for the added constraint. Nevertheless, everything else
being equal, we view as desirable that  a new definition of FDs in the
presence of \nullname{} markers should respect Armstrong's axioms
({\bf G2}), as well as enforce realizable \nullname{} markers.


Of course, our goals so far can be accomplished by being very restrictive on the
use of  \nullname{} markers. One might even take the stance that
\nullname{} markers should always be
forbidden~\cite{Date:2009:SRT:1550741}. But we 
also want to allow common uses seen in production-quality
applications. For example, we have observed that many database schemas
allow \nullname{} markers on attributes that do not determine other
attributes.   Thus, it is another objective~({\bf G3}) of this research
to allow \nullname{} markers when this can be done without violating
other goals. At a minimum, we should allow \nullname{} markers without
having to come up with contrived examples. 

Finally, any enforcement of FDs is going to be considered, from the 
point of view of transaction or query processing, as overhead---just
like enforcing primary and foreign key constraints. Thus, any
definition should  be computationally inexpensive ({\bf G4}). In
practice, this means that we exclude elegant but
challenging  models  such as
v-tables~\cite{Imielinski:1984:IIR:1634.1886}. For example,  we should be
able to determine whether the FD $X \to Y$ holds  by considering \emph{only}
the attributes in $X$ and $Y$. 

To summarize, we seek to extend FDs to include \nullname{} markers
in such a way that: 
\begin{enumerate*}
\item {\bf G1}: FDs enforce realizable \nullname{} markers. Further,
  this should hold for sets of FDs ({\bf strong G1}). 
\item {\bf G2}: Armstrong's axioms are satisfied. 
\item {\bf G3}: FDs should not restrict the use of \nullname{} markers
  unnecessarily.  
\item {\bf G4}: Enforcing FDs should be computationally  practical.
\end{enumerate*}

\section{Basic concepts}
\label{sec:basic}
Let a \emph{relation} $r$ be as  in SQL:
 a finite multiset of tuples over a given schema $\sch(r)$, with the
caveat that a tuple may contain \nullname{} markers. 
Two tuples are considered \emph{duplicates} if all non-\nullname{} attributes
are equal and any \nullname{} marker in one tuple is matched by
 a \nullname{} marker in the other tuple; otherwise the tuples
 are \emph{distinct}.   

We assume that there is an infinite set $V$ of values, from where all
the constants in any relation are drawn.
These values have a relation '=' defined on them, which is reflexive,
symmetric and transitive: given any $x,y, z\in V$, we have than $x =
x$, $(x=y )\Rightarrow (y = x)$ and $(x = y, y = z) \Rightarrow x =
z$. Hence, the relation ``='' is an equivalence relation.

Given an attribute $A$ in the schema of relation $r$ and a tuple $t$, we
use $t[A]$ as is customary, to denote the 
value of $t$ for $A$. This is extended to sets of attributes $X
\subseteq \sch(r)$ as usual. We then say, for two tuples $t$, $t'$, 
that $t[A] = t'[A]$ is \emph{true} if both $t[A]$ and
$t'[A]$ are equal non-\nullname{} values; \emph{false}
if both $t[A]$ and $t'[A]$ are values, but they are different;
and \emph{unknown} otherwise (i.e., if either one of $t[A]$ or
$t'[A]$, possibly both, are \nullname{} markers). Again, this is
extended to sets of attributes $X \subseteq \sch(r)$ in the usual way:
for two tuples $t$, $t'$,  $t[X] = t'[X]$ is \emph{true} if $t[A] =
t'[A]$ is true for every $A \in X$; \emph{false}  if $t[A] =
t'[A]$ is false for some $A \in X$,  and \emph{unknown} otherwise. As
a shorthand, we write $t=t'$ for $t[\sch(r)]=t'[\sch(r)]$.  We write
$\pi_X(r)$ for the projection of all tuples in $r$ on $X$: starting
from $\{t[X]|t\in r\}$, all duplicates are removed.

Let $r$ be a fixed relation. If we disallow \nullname{} markers in
$r$, then  a FD $X \to Y$ is satisfied if $t[Y]=t'[Y]$ when two tuples $t,t'$
are such that $t[X]=t'[X]$.
In this context (where \nullname{} markers are forbidden), FDs satisfy
Armstrong's axioms. We can also formalize the concept of key if there
is no \nullname{} marker in $r$. A set of attributes $K$ is a
\emph{superkey} 
 iff $K \to A$ holds for any attribute $A \in \sch(r)$;
and a  \emph{key} if it is 
a minimal superkey. \emph{Primary keys} in SQL are keys with attributes
where \nullname{} markers are forbidden;
SQL allows \nullname{} markers in other keys. 

We say that an
attribute $A$ is \emph{non-\nullname{}} in a relation $r$ if for all $t \in
r$ we have that $t[A]$ is non-\nullname{}. 
Given a set of attributes $X$, 
we say that the 
\nullname{} marker  appearing in a tuple $t$ at attribute $A$
($t[A]=\nullname{}$) is \emph{in $X$} if $A \in X$.

\section{Strong and weak functional dependencies}
\label{sec:strongandweak}
Levene and Loizou~\cite{levene1998axiomatisation} propose one of the
few extensions of FDs over \nullname{} markers.
We formalize their definitions as follows.
A \emph{valuation} $\varphi$ for a relation $r$ assigns to each 
\nullname{} marker in a tuple of $r$ a value from $V$---each \nullname{} marker
may receive a different value---while leaving non-\nullname{}
values from $V$ unchanged.
Given a relation $r$, each $\varphi(r)$ is called a \emph{possible world} for
$r$. The semantics of FDs with \nullname{} markers, as defined by
Levene and Loizou, 
follows the idea of modal logic~\cite{chellas1980modal}: we have two distinct readings,
depending on whether the FD holds in some or all possible worlds.

\begin{definition}
A FD $F$ holds weakly in relation $r$ iff $F$ holds in a possible world
for~$r$---i.e., there exists a valuation $\varphi$ such that $F$
holds in $\varphi(r)$. $F$ is called a \emph{weak FD} (WFD).
\end{definition}

\begin{definition}
A FD $F$ holds strongly in relation $r$ iff $F$ holds in all possible worlds
for~$r$---i.e., for every valuation $\varphi$ for $r$,  $F$
holds in $\varphi(r)$. $F$ is called a \emph{strong FD} (SFD).
\end{definition}

Looking back at  Table~\ref{FirstTable}, consider the following  FDs:
$\textrm{chair} \to \textrm{professor}$ and
$\textrm{professor} \to \textrm{chair}$. Both hold weakly, while
neither holds strongly (see Table~\ref{table:whichhold}).

\begin{table}\centering
\caption{\label{table:whichhold}Various FDs applied to the relation of
Table~\ref{FirstTable} and whether they hold  strongly, weakly,
super-reflexively or literally} 
\begin{tabularx}{0.9\columnwidth}{lXXXX}\hline
         & SFD & WFD & SRFD & LFD \\\hline
$\textrm{chair} \to \textrm{professor}$  & no &  yes & no & yes \\
$\textrm{professor} \to \textrm{chair}$  &  no & yes & yes & no \\
\hline
\end{tabularx}
\end{table}

A strong FD is always also a weak FD\@. Strong FDs
satisfy Armstrong's axioms while weak FDs do not. When a FD holds
strongly, we can replace any \nullname{} marker by a value, and
the FD  still holds. In fact, strong FDs enforce realizable
\nullname{} markers. 

The Levene and Loizou model has a substantial formal appeal,
but it also has some drawbacks from a pragmatic point of view: 
\begin{itemize}

\item Weak FDs might be too weak.
Even if each FD in a set $\mathcal{F}$ of FDs holds
weakly, there might be no single possible world $\varphi(r)$ where all
of the  FDs 
in the set $\mathcal{F}$ hold. See Table~\ref{FirstTable} for a
counterexample: both FDs ($\textrm{professor} \to \textrm{chair}$  and
$\textrm{chair} \to \textrm{professor}$) hold weakly, but there is no possible world
where they both hold.
That is, we can substitute some value for the \nullname{} marker   to satisfy
the first FD, and substitute \emph{another} value to satisfy the second FD,
but no single value  makes both FDs true at the same time (thus,
$\mathcal{F}$ does not have property {\bf strong G1}). 
This is true even when
each attribute appears only once on the right-hand-side of a FD\@:
given the schema $A,B,C$ and the FDs $A\to B$ and $B\to C$, the set of
tuples  $(a,\nullname{},b)$ and $(a,\nullname{},c)$ satisfy both FD weakly, but
there is no world where they both hold. 
 This last example also illustrates that weak FDs are not
transitive: $A \to B$ and $B \to C$ can hold weakly while $A \to C$
may not. That is, weak FDs do not satisfy
  Armstrong's axioms (thus failing {\bf G2}). 

We could \emph{fix} weak FDs to ensure that they enforce, for example,
realizable \nullname{} markers by requiring that there exists a
valuation corresponding to the  set of all FDs. However, this may
prove computationally challenging (hence failing {\bf G4}). 

  \item Strong FDs might be too strong (thus failing {\bf
    G3}). Indeed, it seems unnecessary to always require that FDs
    should hold in all possible worlds. For example, consider 
  the schema $A,B,C$ and the FDs $A\to B$ and $B\to C$, and the set of
tuples  $(a,b, \nullname{})$ and $(c,b, \nullname{})$. Though it
appears like a reasonable relation, the FD $B\to C$ fails to hold
strongly.  
\end{itemize}

In some sense, the weak and strong FDs are two extremes, whereas the
right solution might be somewhere in between.  Indeed, any form of FD
that supports realizable \nullname{} markers ({\bf G1}) on a per FD
basis, is going to be equivalent to or stronger than weak
FDs. Meanwhile, strong FDs have the properties we seek, except that
they are too restrictive ({\bf  G3}). 

\section{Literal and Super-reflexive functional dependencies}
\label{sec:litandsr}

We propose two alternative definitions of the concept of FD in the
presence of \nullname{} markers. The first one is a natural extension
of the 3-value logic proposed by Codd and used by SQL. 

\begin{definition}\label{def:SRFD}
A FD  $X \to Y$ holds  super-reflexively if, for any two tuples $t,t'$, 
 when 
$t[X]=t'[X]$ is not false (i.e., it is either true or unknown), then
$t[Y]=t'[Y]$ is also not false. We
say $X \to Y$ is a \emph{super-reflexive} FD (SRFD). 
\end{definition}
In this first definition, the \nullname{} marker is effectively  
equal to any other value (i.e., $\nullname{} = a$ is treated as
true for any value $a$), hence the term \emph{super-reflexive} (SR).

As an illustration, consider Table~\ref{FirstTable}. We have that
$\textrm{professor} \to \textrm{chair}$  hold super-reflexively whereas 
$\textrm{chair} \to \textrm{professor}$ does not (see Table~\ref{table:whichhold}).

Our second definition is reminiscent of how languages such as
JavaScript handle \nullname{} markers. As a first approximation, they
consider \nullname{} to be effectively a regular value, with
$\nullname{} = \nullname{}$ is always true (i.e., ``='' remains a reflexive
relation\footnote{In fact, '=' remains an equivalence relation,
  something that does not hold for Codd's 3-value logic if you
  consider  \nullname{} markers to be part of the value domain.}), but 
$\nullname{} =a$ always false for $a$ non-\nullname{}.   

We say that $t[A]$ and $t'[A]$ are \emph{identical} if both contain
\nullname{} or both contain the same value; this is also
  extended to set of attributes $X$ and to whole tuples as usual:
  $t[X]$ is identical to $t'[X]$  if and only if $t[A]$ and $t'[A]$
  are identical for all $A\in X$.

\begin{definition}\label{def:LFD}
A FD  $X \to Y$ holds  literally if, for any two tuples $t,t'$, when  
$t[X]$ and $t'[X]$ are  identical then $t[Y]$ and $t'[Y]$ are also
identical. We say $X \to Y$ is a \emph{literal} FD (LFD). 
\end{definition}

Consider again Table~\ref{FirstTable}. In contrast with the
super-reflexive case, we have that the FDs $\textrm{chair} \to
\textrm{professor}$  holds literally whereas  $\textrm{professor} \to
\textrm{chair}$ does not. (See again Table~\ref{table:whichhold}.)

There are alternative definitions that we could have used. For
example, we could have defined FDs $X \to Y$ to hold if   when 
$t[X]=t'[X]$ is true (as per Codd's 3-value logic) then $t[Y]=t'[Y]$
must be true. Or, we could have defined an FD $X \to Y$ to hold
if when $t[X]=t'[X]$ is not false then  $t[Y]=t'[Y]$ must be true; or 
if when $t[X]=t'[X]$ is true then $t[Y]=t'[Y]$ must be not false.
However, these  alternative definitions are unsatisfying: they fail to
satisfy Armstrong's axioms ({\bf G2}), or do not allow  \nullname{}
markers where they are commonly used ({\bf G4}).
By contrast, Definitions~\ref{def:SRFD} and~\ref{def:LFD} have the
properties that we 
required in \S~\ref{sec:motivation} starting with Armstrong's axioms
(which follows by inspection). For example, regarding transitivity,
consider the two FDs $X\to Y$ and $Y \to Z$ that hold
super-reflexively (resp.\ literally). Given two tuples $t,t'$ such that
$t[X]=t'[X]$ is not false (resp. $t[X]$ and $t'[X]$ are identical),
then  $t[Y]=t'[Y]$  is not false (resp. $t[Y]$ and $t'[Y]$ are
identical) which implies that $t[Z]=t'[Z]$ is not false (resp. $t[Z]$
and $t'[Z]$ are identical). Thus we have that $X \to Z$, proving
transitivity. 

\begin{lemma}
Super-reflexive and literal FDs respect Armstrong's axioms ({\bf G2}). 
\end{lemma}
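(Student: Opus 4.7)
The plan is to verify each of the three Armstrong axioms in turn, for both the SRFD and the LFD notion, relying on the fact that the comparison of tuples on a set of attributes decomposes attribute by attribute. The author has already sketched the argument for transitivity in the paragraph preceding the lemma, so the main work is to dispatch reflexivity and augmentation. I do not expect any substantial obstacle: the proof is essentially bookkeeping on the definitions of ``not false'' (for SRFDs) and ``identical'' (for LFDs), lifted from single attributes to sets of attributes. The one place to be careful is to use the precise definitions of $t[X]=t'[X]$ and ``$t[X]$ identical to $t'[X]$'' given in \S\ref{sec:basic}, since both are quantified over every attribute $A\in X$.

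For reflexivity, suppose $Y\subseteq X$ and let $t,t'$ be two tuples. Under the SRFD semantics, if $t[X]=t'[X]$ is not false, then for every $A\in X$ the value $t[A]=t'[A]$ is not false; restricting to $A\in Y\subseteq X$ gives that $t[Y]=t'[Y]$ is not false. Under the LFD semantics, the same argument works with ``identical'' in place of ``not false''. So $X\to Y$ holds super-reflexively and literally.

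For augmentation, assume $X\to Y$ holds super-reflexively, and suppose $t[XZ]=t'[XZ]$ is not false. Then $t[X]=t'[X]$ is not false (it is a restriction), so by the hypothesis $t[Y]=t'[Y]$ is not false; similarly $t[Z]=t'[Z]$ is not false. Since ``not false on $YZ$'' means ``not false on every attribute in $Y\cup Z$'', we conclude $t[YZ]=t'[YZ]$ is not false, giving $XZ\to YZ$ super-reflexively. The literal case is verbatim with ``identical'' substituted for ``not false''.

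For transitivity, I would reuse the short argument already given in the paragraph above the lemma: assuming $X\to Y$ and $Y\to Z$ both hold super-reflexively (resp.\ literally), the property ``not false'' (resp.\ ``identical'') propagates from $X$ through $Y$ to $Z$ on any pair of tuples. Combining the three axioms, super-reflexive and literal FDs satisfy Armstrong's axioms, which establishes \textbf{G2}.
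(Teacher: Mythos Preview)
Your proposal is correct and follows exactly the approach the paper intends: the paper's own ``proof'' is just the remark that the axioms follow by inspection, together with the transitivity sketch you cite, so you have simply filled in the reflexivity and augmentation cases that the paper leaves implicit. The attribute-wise unpacking of ``not false'' and ``identical'' that you use is precisely what the paper's definitions in \S\ref{sec:basic} support, and nothing more is needed.
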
 

Before we proceed to establish other properties, we need a 
technical result that makes other proofs easier. First, 
we can verify that all FDs can be decomposed into FDs where the right-hand-side
contains a single attribute.   For example, the FD $\textrm{professor}
\to \{\textrm{chair}, \textrm{department}\}$ is equivalent to the
following two FDs: 
\begin{itemize} 
\item $\textrm{professor} \to  \textrm{chair}$, and 
\item $\textrm{professor} \to \textrm{department}$.
\end{itemize}

\begin{lemma}
We have that  $X \to Y$ for $Y=\{A_1,\ldots A_n\}$ is
 equivalent to $(X-\{A_1\}  \to \{A_1\}) \land \cdots \land (X-\{A_n\}
 \to \{A_n\})$ whether we consider 
 weak, strong, literal or super-reflexive FDs. \end{lemma}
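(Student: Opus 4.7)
The plan is to verify the equivalence for each of the four FD flavors in turn, handling the two directions uniformly. I read $X - \{A_i\}$ as coinciding with $X$ (the intended case, matching the preceding example where the right-hand side is disjoint from the left, and the only nontrivial situation). The key structural observation is that tuple agreement on $Y$, in each flavor's particular sense---\emph{not false} equality for SRFD, identity for LFD, equality in every valuation for SFD, equality in some valuation for WFD---decomposes into the conjunction of agreement on each singleton $\{A_i\}$.

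For the forward direction, fix $A_i \in Y$ and suppose $X \to Y$ holds in the relevant flavor. Given tuples $t, t'$ agreeing on $X$ in that flavor's sense, the hypothesis yields agreement on $Y$, which restricts trivially to $\{A_i\}$. For WFD in particular, the same valuation $\varphi$ that witnesses $X \to Y$ also witnesses every $X \to \{A_i\}$ in $\varphi(r)$. For the reverse direction on SFD, LFD, and SRFD, I would invoke the union rule $X \to \{A_1\}, X \to \{A_2\} \vdash X \to \{A_1, A_2\}$, which is derivable from Armstrong's axioms via two applications of augmentation followed by transitivity; since the previous lemma established Armstrong's axioms for SRFD and LFD (and they are standard for SFD), iterating yields $X \to Y$.

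The main obstacle is the weak case in the reverse direction: the individual valuations $\varphi_i$ witnessing each $X \to \{A_i\}$ need not coincide, and weak FDs notoriously fail to compose (as the paper already emphasizes in \S\ref{sec:strongandweak}). I would overcome this by building a single valuation $\varphi$ directly. First, assign to every \nullname{} marker appearing at an $X$-attribute a fresh pairwise-distinct value, so that in $\varphi(r)$ the only pairs $t, t'$ with $t[X] = t'[X]$ are the ``forced'' pairs whose $X$-entries were already wholly non-\nullname{} and equal. For every such forced pair and every $A_i$, the weak hypothesis $X \to \{A_i\}$ already requires the non-\nullname{} $A_i$-values of $t$ and $t'$ to coincide (otherwise no valuation could render $X \to \{A_i\}$ true). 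Hence each \nullname{} marker in an $A_i$-position can be consistently extended to this common value. The resulting $\varphi$ satisfies every $X \to \{A_i\}$ in $\varphi(r)$ simultaneously, so $X \to Y$ holds in $\varphi(r)$, establishing the weak FD.
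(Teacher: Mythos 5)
Your proof is correct, but it cannot be compared step-for-step with the paper's, because the paper offers no argument at all: it asserts that the lemma ``follows by inspection'' and immediately uses it to reduce to singleton right-hand sides disjoint from the left. Your write-up supplies exactly what that inspection glosses over, and two points deserve emphasis. First, your reading of $X-\{A_i\}$ as $X$ (i.e., assuming $X\cap Y=\emptyset$) is not merely charitable but necessary: taken literally the statement fails even for classical FDs --- over schema $\{A,B\}$ with tuples $(a_1,b)$ and $(a_2,b)$, taking $X=\{A,B\}$ and $Y=\{A\}$, the FD $X\to Y$ holds by reflexivity while $X-\{A\}\to\{A\}$, i.e.\ $B\to A$, does not. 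Second, and more substantively, the weak case of the reverse direction is the one place where ``inspection'' is genuinely unsafe: the paper itself stresses that weak FDs violate Armstrong's axioms, so the union-rule shortcut you correctly invoke for the strong, literal and super-reflexive flavors is unavailable there. Your explicit construction --- give every \nullname{} marker under an $X$-attribute a fresh pairwise-distinct value so that the only pairs constrained to agree on $X$ are those already non-\nullname{} and equal there, note that each weak $X\to\{A_i\}$ forces the non-\nullname{} $A_i$-values within such a class to coincide, and then fill the remaining $A_i$-markers class by class with that common value --- is sound and produces a single valuation witnessing all $n$ component FDs, hence $X\to Y$, at once. (The one cosmetic omission: in a class with no non-\nullname{} $A_i$-value, any value chosen uniformly across the class serves.) In short, your route buys an actual verification of the one flavor for which the claim is not routine, where the paper buys brevity at the cost of leaving that case unjustified.
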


The proof of this lemma follows by inspection. Hence, it is enough
 to consider FDs $X \to Y$ where  $Y$ is a singleton disjoint from $X$. 

First, we must show that SRFDs and LFDs satisfy condition {\bf
  G1}: \nullname{}
 markers are realizable.  We begin with SRFDs. We show
that given the SRFD $X\to Y$, all \nullname{} markers in $X \cup Y$
are realizable with respect to $X \to Y$. 
   To illustrate this result, consider Table~\ref{FirstTable}
where $\textrm{professor} \to \textrm{chair}$ holds super-reflexively:
we can substitute  $\textrm{Jill}$ for the  \nullname{} marker without
violating the FD.  

\begin{lemma}\label{lemma:techlemma}
Consider a SRFD $X\to Y$ in a relation  such that $X$ and $Y$ are
disjoint and $Y$ is a singleton. 
\begin{enumerate}
\item \label{leftsafe} We can replace any \nullname{} marker in $X$
  by   \textbf{any} actual value without violating the SRFD $X\to Y$. 
\item \label{rightsafe} We can replace any \nullname{} marker in $r$ in $
Y$ by \textbf{at least one} actual value without violating the SRFD
$X\to Y$, assuming that attributes in $X$ are non-\nullname{}.
\end{enumerate}

\end{lemma}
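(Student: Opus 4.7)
The plan is to reduce both statements to a careful reading of Definition~\ref{def:SRFD}, exploiting the fact that a single-cell replacement can only alter equality comparisons in pairs involving the modified tuple~$t$. Writing $Y=\{B\}$ since $Y$ is a singleton simplifies bookkeeping.

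For part~(\ref{leftsafe}), I would fix $A\in X$ with $t[A]=\nullname{}$ and replace $t[A]$ by an arbitrary $v\in V$. For any other tuple $t'$, the comparison $t[A]=t'[A]$ moves from ``unknown'' to ``true'', ``false'', or ``unknown''. Since the other attributes of $X$ are untouched, $t[X]=t'[X]$ can only transition from ``not false'' to ``false''; it never moves from ``false'' to ``not false''. Hence the collection of pairs on which $t[X]=t'[X]$ is not false after replacement is contained in the corresponding collection before replacement. Because $A\notin Y$, both $t[B]$ and $t'[B]$ are unchanged, so the original SRFD yields ``$t[B]=t'[B]$ is not false'' on every pair that still triggers the premise, and pairs not involving $t$ are unchanged.

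For part~(\ref{rightsafe}), assume every attribute in $X$ is non-\nullname{} throughout $r$ and $t[B]=\nullname{}$. Since $X$ contains no \nullname{} markers, $t[X]=t'[X]$ is always true or false, so the SRFD reduces to: whenever $t'[X]=t[X]$, we need $v=t'[B]$ not false. Let $S$ be the tuples $t'$ with $t'[X]=t[X]$. Any two $t'_i,t'_j\in S$ satisfy $t'_i[X]=t'_j[X]$ (true), so the SRFD forces $t'_i[B]=t'_j[B]$ to be not false; thus all non-\nullname{} values in $\{t'[B]:t'\in S\}$ coincide. Taking $v$ to be this common value when one exists, and arbitrary in $V$ otherwise, every $t'\in S$ satisfies $t'[B]\in\{v,\nullname{}\}$, so $v=t'[B]$ is not false.

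The main obstacle is part~(\ref{rightsafe}): a single value $v$ must work simultaneously for every tuple matching $t$ on $X$. This forces an appeal to the SRFD on pairs not containing $t$, and the hypothesis that $X$ is non-\nullname{} is essential to exclude ``unknown'' $X$-comparisons that would otherwise let distinct non-\nullname{} values of $B$ coexist. Part~(\ref{leftsafe}) is, by contrast, essentially a monotonicity remark: shrinking the set of pairs that trigger the SRFD premise cannot break the SRFD.
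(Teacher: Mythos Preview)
Your proposal is correct and follows essentially the same approach as the paper: part~(\ref{leftsafe}) is the observation that replacing a \nullname{} in $X$ can only shrink the set of pairs on which the SRFD premise is not false (the paper phrases this as a three-case analysis, you phrase it as monotonicity), and part~(\ref{rightsafe}) is handled identically by collecting the set $S$ (the paper's $\tau$) of tuples matching $t$ on $X$, using the SRFD on pairs within $S$ to conclude that at most one non-\nullname{} $Y$-value appears, and choosing $v$ accordingly. Your write-up is, if anything, slightly more explicit about why the $Y$-comparisons are unaffected in part~(\ref{leftsafe}) (via $A\notin Y$) and about why two members of $S$ must agree on $X$ (via the non-\nullname{} hypothesis on $X$).
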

\begin{proof}
Assume that the SRFD is initially satisfied  over $r$. 

(\ref{leftsafe}) Suppose we replace a \nullname{}~marker in
  attribute $B\in X$. Regarding the SRFD $X\to Y$, the following might happen  for two tuples
  $t, t'$:
  \begin{itemize}
  \item  if  $t[X]=t'[X]$ was false, then it is still false: this cannot affect the SRFD;
  \item if $t[X]=t'[X]$ was true, then it is still true: this cannot affect the SRFD;
  \item if $t[X]=t'[X]$ was unknown, then it might become true or false. Because of the SRFD $X\to Y$, and because  $t[X]=t'[X]$ was not false, we have that $t[Y]=t'[Y]$ is not false. Therefore, if the tuples $t,t'$ satisfied the SRFD before the update, they must satisfy it after the update as well.
  \end{itemize}
Because all pairs of tuples satisfy the conditions of SRFD after the update, the SRFD still hold, proving the first part of the result.
%
%

(\ref{rightsafe})  
Assume that attributes in $X$ are
non-\nullname{}. We want to show that we can replace any 
\nullname{} marker in $Y$ by an actual  value.  
  Pick a tuple $t$ in $r$ where $t[Y]$ contains 
a \nullname{}  marker.  Consider the set $\tau$ of $t'$ such that
 $t'[X]=t[X]$ is not false. (Because attributes in $X$ are
non-\nullname{}, we have that ``$t'[X]=t[X]$ is not false'' is
equivalent to  ``$t'[X]=t[X]$ is true''.)  We have
 that the projection of $\tau$ over $Y$ contains at most one actual
 value. (Suppose it does not, then you can find tuples $t''$ and
 $t'''$ in $\tau$ such that $t''[X]=t'''[X]$ is not false but
 $t''[Y]=t'''[Y]$ is false.) If there is one actual value, set $t[Y]$
 to this value; if not, pick a  value at random. 
This modification clearly does not violate the SRFD $X\to Y$ but it
eliminates one \nullname{} marker.
\end{proof}

To see why Lemma~\ref{lemma:techlemma} implies that SRFDs satisfy
{\bf G1}, consider any SRFD $X\to Y$ over a given relation. We can
substitute actual values for any \nullname{} marker in an attribute of
$X$ by the first part of the lemma. As a second step, since attributes
in $X$ have become non-\nullname{}, we can substitute actual values
for any \nullname{} marker in $Y$.

As for LFDs, we are going to prove something stronger: that they support {\bf strong G1}.

\begin{lemma}\label{lemma:ofsets1}LFDs strongly enforces realizable
  \nullname{} markers ({\bf strong G1}). 
\end{lemma}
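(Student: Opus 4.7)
The plan is to exhibit a single valuation $\varphi$ that realizes \emph{all} null markers at once while preserving every LFD in $\mathcal F$. Concretely, for each attribute $A \in \sch(r)$ I would pick a fresh constant $v_A$ that does not occur anywhere in $r$ and is distinct for distinct attributes, then define $\varphi$ to replace every null occurrence in column $A$ by $v_A$, leaving non-null values unchanged. Since $\varphi(r)$ contains no null markers, ``$X \to Y$ holds literally'' in $\varphi(r)$ coincides with ``$X \to Y$ holds'' in the classical (null-free) sense, so it suffices to check that each LFD in $\mathcal F$ continues to hold after this substitution.

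The key observation, which I would establish first as a small lemma, is that for any set of attributes $X$ and any two tuples $t,t' \in r$, one has $\varphi(t)[X]=\varphi(t')[X]$ if and only if $t[X]$ and $t'[X]$ are \emph{identical} in the sense of Definition~\ref{def:LFD}. The ``if'' direction is immediate. For the ``only if'' direction, I would argue attribute-wise: if $t[A]$ is null then $\varphi(t)[A]=v_A$, and because $v_A$ is fresh, equality with $\varphi(t')[A]$ forces $t'[A]$ to also be null; if $t[A]$ is some value $a$, then $\varphi(t)[A]=a\neq v_A$, which forces $t'[A]$ to be non-null and equal to $a$.

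Given this equivalence, the argument for the lemma is immediate. Pick any LFD $X \to Y$ in $\mathcal F$ and suppose $\varphi(t)[X]=\varphi(t')[X]$. By the observation, $t[X]$ and $t'[X]$ are identical, so by the LFD assumption $t[Y]$ and $t'[Y]$ are identical, and then applying the observation again in the other direction gives $\varphi(t)[Y]=\varphi(t')[Y]$. Hence every LFD in $\mathcal F$ is satisfied in $\varphi(r)$, which is what \textbf{strong G1} demands.

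The delicate step, and the one I would flag as the main obstacle, is the choice of valuation. A naive valuation that uses a \emph{distinct} fresh value for each null occurrence would break the argument, because two nulls that are ``paired'' through identicality on $Y$ would map to different constants and thus violate the FD in $\varphi(r)$. Conversely, reusing existing values from $r$ could spuriously equate tuples on the left-hand side without the LFD providing a matching guarantee on the right-hand side. Using exactly one fresh constant per column is precisely what makes the identicality relation on the original tuples correspond to equality on the substituted tuples, in both directions at once.
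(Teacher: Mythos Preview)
Your proof is correct and follows essentially the same idea as the paper's: replace all \nullname{} markers by fresh constants so that the identicality relation on the original tuples becomes literal equality on the substituted tuples, and then each LFD in $\mathcal F$ transfers directly. The only difference is cosmetic: the paper uses a \emph{single} fresh value $v\in V$ for every \nullname{} marker in every column, rather than one fresh $v_A$ per attribute; your key equivalence (identicality on $X$ iff equality after substitution on $X$) holds just as well in that simpler setting, so the per-column distinction is unnecessary but harmless.
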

\begin{proof} To prove {\bf G1}, it suffices to  replace all the
  \nullname{} markers with a single $v \in V$ not already in the
  relation. By inspection,  this extends to sets of FDs, so we
  get also {\bf strong G1} with this method.
\end{proof}

We have that both LFDs and SRFDs enforce realizable
\nullname{} markers. That is, given a relation with a set of FDs, we
can always replace \nullname{} markers with some actual values
without violating the FDs. In fact, if we add an extra constraint on
SRFDs, they both \emph{strongly} enforce 
realizable \nullname{} markers (in the sense of {\bf strong G1}). 
 In this context, we adopt the practical convention that
some attributes are allowed to contain \nullname{} markers while
others may not: this is motivated by the SQL standard. 
Given a set of FDs $\mathcal{F}$, we say that an attribute $B$
\emph{determines} another  attribute $A$ under $\mathcal{F}$ if there is a
FD $B\in X\to Y \ni A$  in the transitive closure of $\mathcal{F}$. Naturally, this property is transitive: if $A$ determines $B$ and $B$ determines $C$ then $A$ determines $C$.
(By convention, we omit loops in $\mathcal{F}$: $X \to X$.)

\begin{condition}[1RHS] Consider a set of FDs $\mathcal{F}$ over a
  relation. Consider any attribute $A$ allowed to
  contain \nullname{} markers. Then $A$ must appear on the
  right-hand-side of at most one FD in the set of FDs $\mathcal{F}$ of the relation. Moreover, given two
  distinct  attributes allowed to contain \nullname{} marker, $A$ and $B$, if $A$ determines $B$, $B$ cannot determine $A$.
\end{condition}

We stress that Condition~1RHS only applies to attributes allowed to contain \nullname{} markers: no constraint is required on other attributes. 

Fig.~\ref{fig:illustration} gives an example of a set of FDs satisfying the condition 1RHS even if all attributes are allowed to contain \nullname{} markers: $\{E,D\}\to \{A\}, \{A\}\to \{F\}, \{A,B\}\to \{F\}$. However, if we replaced the single FD $\{E,D\}\to \{A\}$ by two FDs such as $\{E\}\to \{A\}$ and $\{D\}\to \{A\}$, we would need to add the requirement that $A$ is non-\nullname{} to satisfy 1RHS\@. Similarly, if we added the FD 
$\{F\}\to \{A\}$ in addition to the existing FD $\{A\}\to \{F\}$, we would need to require that both $A$ and $F$ are non-\nullname{} since $F$ would determine $A$ while $A$ determines $F$.

\begin{lemma}\label{lemma:ofsets2}SRFDs strongly enforces realizable
  \nullname{} markers whenever the 1RHS condition is satisfied ({\bf
    strong G1}). 
\end{lemma}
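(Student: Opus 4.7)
The plan is to resolve the null markers one attribute at a time in a carefully chosen order, applying Lemma~\ref{lemma:techlemma} at each step. By the preceding decomposition lemma, I may assume without loss of generality that every FD in $\mathcal{F}$ has a singleton right-hand-side.

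First I would build a directed graph on the attributes allowed to contain \nullname{} markers, drawing an edge from $A$ to $B$ whenever $A$ determines $B$ under $\mathcal{F}$. The second clause of Condition~1RHS forbids a pair of nullable attributes from mutually determining each other, and since ``determines'' is already transitive, any directed cycle would collapse to such a mutual pair. Hence the nullable attributes admit a topological sort $A_1,\ldots,A_m$ in which $A_i$ determines $A_j$ only if $i \le j$.

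Then I would proceed by induction on this ordering. At step $k$, suppose every null marker in $A_1,\ldots,A_{k-1}$ has already been replaced by an actual value and that every SRFD in $\mathcal{F}$ remains satisfied. If no FD of $\mathcal{F}$ has $A_k$ on its right-hand-side, I replace each null in $A_k$ by an arbitrary value; part~(1) of Lemma~\ref{lemma:techlemma} preserves every SRFD with $A_k$ on its left-hand-side, and other FDs are untouched. Otherwise, the first clause of Condition~1RHS supplies a unique FD $X \to \{A_k\}$ with $A_k$ on its right-hand-side. Every attribute of $X$ is either declared non-nullable (so free of \nullname{} markers from the outset) or is nullable and determines $A_k$, hence precedes $A_k$ in the topological order and has already been resolved. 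Consequently $X$ is null-free at step $k$, so part~(2) of Lemma~\ref{lemma:techlemma} supplies a value for each null in $A_k$ that preserves $X \to \{A_k\}$, and part~(1) again safeguards any FD that carries $A_k$ on its left-hand-side.

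The delicate step---really the whole content of the argument---is the guarantee that at step $k$ the left-hand-side of the RHS-witnessing FD is already null-free. This is precisely what the topological order buys us, via the observation that any $A \in X$ with $X \to \{A_k\}$ in $\mathcal{F}$ trivially determines $A_k$ and so cannot come later. After step $m$ no null markers remain and every SRFD in $\mathcal{F}$ is simultaneously satisfied, establishing \textbf{strong G1}.
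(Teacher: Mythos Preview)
Your proof is correct and follows essentially the same approach as the paper: both build the ``determines'' digraph on the nullable attributes, use Condition~1RHS to conclude it is acyclic, and then process the attributes in topological order, invoking part~(1) of Lemma~\ref{lemma:techlemma} for left-hand-side occurrences and part~(2) for the unique right-hand-side occurrence once the left-hand-side has been cleared of nulls. The paper phrases the traversal as repeated removal of zero in-degree nodes rather than an explicit topological sort with induction, but this is only a cosmetic difference.
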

\begin{proof}
Assume without loss of generality that all SRFDs in the set are of the
form $X \to Y$ where $Y$ is a singleton and $X,Y$  are disjoint.

\begin{figure}
\centering
\begin{tikzpicture}[thick,scale=0.7, every node/.style={transform shape},->,>=stealth',shorten >=1pt,auto,node distance=3cm,
                    semithick,main node/.style={circle,draw,font=\sffamily\Large\bfseries}]


  \node[state] (A)                    {$A$};
  \node[state]         (B) [above right of=A] {$B$};
  \node[state]       (C) [below right of=B] {$C$};
  \node[state]         (D) [below left of=C] {$D$};
  \node[state]         (E) [below left of=A]       {$E$};
  \node[state]         (F) [above left of=A]       {$F$};

  \path (A) edge            node {\small $\{A\}\to \{F\}$} (B)
        (B) edge           node {\small$\{A,B\}\to \{F\}$} (C)
        (D) edge              node {\small $\{E,D\}\to \{A\}$} (A)
        (E) edge   node {\small $\{E,D\}\to \{A\}$} (A)
        (A) edge   node {\small $\{A\}\to \{F\}$} (F);

\end{tikzpicture}
\caption{\label{fig:illustration}Illustration used by the proof of Lemma~\ref{lemma:ofsets2} for
  the set of FDs $\{\{E,D\}\to \{A\}, \{A\}\to \{F\}, \{A,B\}\to \{F\}\}$. }
\end{figure}
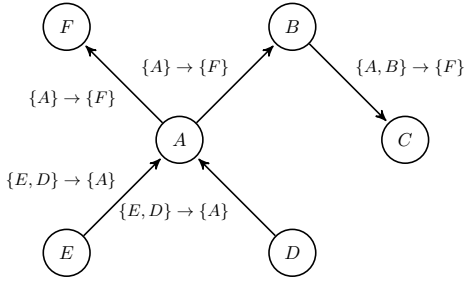
  Construct a graph where each attribute allowed to contain
  \nullname{} markers  is a node, and there is an edge between two
  attributes $A,B$ if and only if $A$ determines $B$.  
  We illustrate such a graph in Fig.~\ref{fig:illustration}  where, for simplicity, we omitted some of the edges that are implied by transitivity. 
  
  Temporarily
  assume that the graph is not empty. Because of condition 1RHS, the graph must be cycle-free and, therefore, some of the nodes must have a zero in-degree (e.g., $E$ and $D$ in Fig.~\ref{fig:illustration}). Call
  this set of nodes/attributes $\mathcal{A}_0$. 
  
  As per
  Lemma~\ref{lemma:techlemma}, we can substitute actual values for any
  \nullname{} marker they contain.  Indeed, consider such an  attribute $A\in \mathcal{A}_0$. This
attribute appears on the right-hand-side of at most   one FD in the
set (call it $F_A$), however $A$ may appear on the left-hand-side 
  of several FDs. These FDs are not a concern: replacing a \nullname{}
  marker in attribute $A$ may never violate a super-reflexive FD as
  per the first part of Lemma~\ref{lemma:techlemma}. Meanwhile,
  because $F_A$ is such that no attribute on its left-hand-side
  contains a \nullname{} marker, then the second part of
  Lemma~\ref{lemma:techlemma} tells us that the \nullname{} markers of
  $A$ are realizable. 

 After substituting actual values for any \nullname{} marker in the
 attributes of $\mathcal{A}_0$, remove these nodes from the
 graph. There must again be nodes with zero in-degree (e.g., node $A$ in Fig.~\ref{fig:illustration}) or the graph is
 empty. Repeat the process until the graph is empty. 
    
  Attributes that either do not appear as part of any FD, or that are not allowed to contain \nullname{} markers,  are not a
  concern.  \end{proof} 

\section{Comparing functional dependencies}
\label{sec:comparison}
We are now in a position to characterize LFDs and SRFDs properly. 
We can relate  LFDs and SRFDs to each other, and 
with Levene and Loizou's definitions. All are
\emph{conservative} extensions of the classical concept: in a table
without any \nullname{}, they coincide with classical FDs. However, in
general, LFDs and SRFDs are \emph{incomparable} as illustrated by Table~\ref{table:whichhold}.

We  show that 
 $\mbox{strong}
\Rightarrow   \mbox{super-reflexive}  \Rightarrow \mbox{weak}$
and  $\mbox{strong}
\Rightarrow   \mbox{literal}  \Rightarrow \mbox{weak}$ (see
Fig.~\ref{Fig:diag}). 
That is, both LFDs and SRFDs are stronger than weak FDs, whereas
strong FDs are stronger than  both LFDs and SRFDs.  

\begin{figure}\centering
\includegraphics[width=0.3\textwidth]{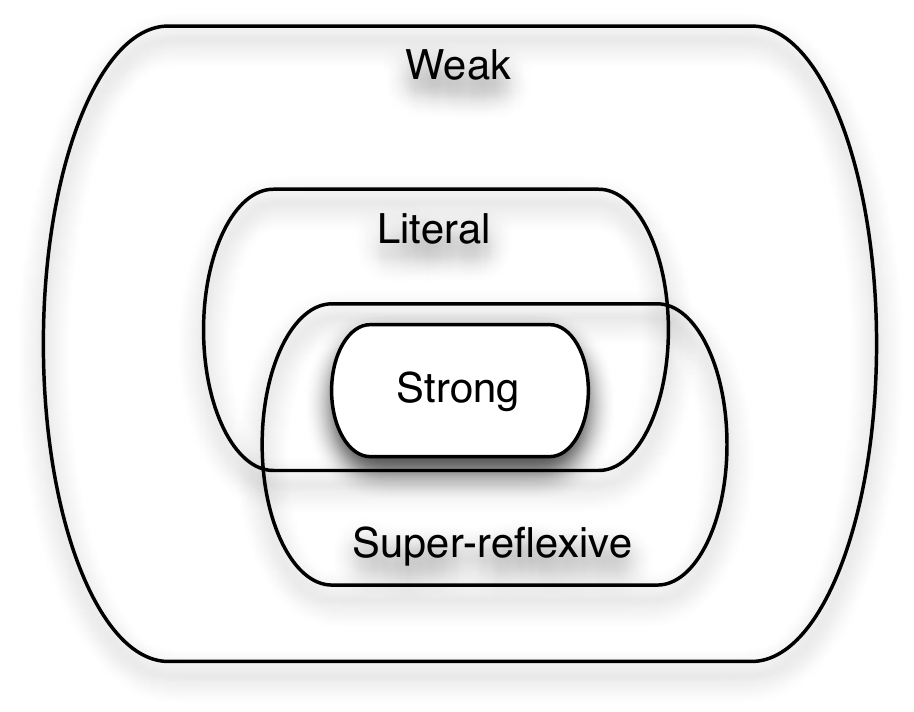}
\caption{\label{Fig:diag}Venn diagram illustrating
  Lemma~\ref{lemma:weaklemma}.} 
\end{figure}

 \begin{lemma}\label{lemma:weaklemma}
The following holds:
\begin{enumerate}
\item
\label{lit-weak}
If the FD $X \to Y$ holds literally,
then it holds weakly. 
\item
\label{sr-weak}
If the FD $X \to Y$ holds super-reflexively,
then  it holds weakly. 
\item
\label{strong-both}
If the FD $X \to Y$ holds strongly then it must hold 
 literally and super-reflexively.  
\end{enumerate}
 \end{lemma}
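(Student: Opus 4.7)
The plan is to dispatch each part by an explicit valuation: a valuation witnessing weakness for parts~\ref{lit-weak} and~\ref{sr-weak}, and a valuation witnessing failure of strongness whenever literal or super-reflexive fails for part~\ref{strong-both}. Throughout, I would assume without loss of generality, via the decomposition lemma stated just before Lemma~\ref{lemma:techlemma}, that $Y$ is a single attribute disjoint from $X$.

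For part~\ref{lit-weak}, I would reuse the single-fresh-value construction from the proof of Lemma~\ref{lemma:ofsets1}: substitute every \nullname{} marker in $r$ by a common value $v \in V$ that does not already occur in $r$. Freshness ensures that, after the substitution, $t[A] = t'[A]$ classically if and only if $t[A]$ and $t'[A]$ were identical originally, since a substituted $v$ can match only another substituted $v$ while non-\nullname{} values are unchanged. The literal hypothesis then transfers verbatim to classical equality on $Y$, giving $X \to Y$ in this valuation and hence weakly.

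For part~\ref{sr-weak}, I would invoke Lemma~\ref{lemma:techlemma} iteratively. Apply its first item repeatedly to replace, one at a time, every \nullname{} marker appearing in an attribute of $X$ by an arbitrary value; each step preserves the SRFD. Once $X$ is \nullname{}-free, apply the second item repeatedly to eliminate the \nullname{} markers in $Y$, again preserving the SRFD. Remaining \nullname{} markers outside $X \cup Y$ may be assigned arbitrarily, as they do not interact with $X \to Y$. The resulting valuation has no \nullname{} markers on $X \cup Y$, so the SRFD there collapses to the classical FD, and $X \to Y$ holds weakly.

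For part~\ref{strong-both}, I would argue by contrapositive. If $X \to Y$ fails literally, fix witnesses $t, t'$ with $t[X], t'[X]$ identical but $t[Y], t'[Y]$ not identical at some attribute $A \in Y$. I would build a valuation that forces $t[X] = t'[X]$ classically by sending every matched pair of \nullname{} markers in $X$ to the same value---possible because identity on $X$ aligns \nullname{}s with \nullname{}s and values with the same values---while preserving $t[A] \neq t'[A]$ classically: if both entries are non-\nullname{} and already distinct, every valuation leaves them so; if exactly one is \nullname{}, substitute a value distinct from the other, using that $V$ is infinite. Extending arbitrarily to the remaining \nullname{} markers produces a valuation violating $X \to Y$, contradicting strongness. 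For super-reflexive failure, the witnesses satisfy $t[X] = t'[X]$ not false and $t[Y] = t'[Y]$ false; the ``not false'' condition lets me extend to classical equality on $X$ (pair up \nullname{}s coordinate-wise, and where one side has \nullname{} while the other has $a$, substitute $a$), and falseness on $Y$ is already witnessed by two distinct non-\nullname{} values and therefore survives every valuation. The main delicacy across this part is the small case analysis of how identity or ``not falseness'' can fail on a coordinate, but each case admits a valid local substitution.
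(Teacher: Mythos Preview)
Your proposal is correct and follows essentially the same approach as the paper: the same fresh-value substitution for part~\ref{lit-weak}, the same appeal to Lemma~\ref{lemma:techlemma} for part~\ref{sr-weak}, and the same valuation-building argument for part~\ref{strong-both}, which the paper phrases directly rather than contrapositively but with the identical case analysis.
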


 \begin{proof}
Assume without loss of generality that $X$ and $Y$ are disjoint and
that $Y$ is a singleton. 

(\ref{lit-weak}) If the FD  $X \to Y$ holds literally then we can
   replace any \nullname{} marker by any one $v\in V$ not already
   present in the relation, and the FD still holds by inspection. This
   shows that literal FDs are stronger than weak FDs.  

(\ref{sr-weak})
By Lemma~\ref{lemma:techlemma}, we can construct a valuation such that the super-reflexive FD is valid in the
conventional sense. Because of the existence of the valuation, we have
 that $X \to Y$ holds weakly.   
 
(\ref{strong-both})
We first prove that strong implies super-reflexive.
Suppose that $X \to Y$ holds strongly. Consider two tuples $t,t'$. If
$t[X]=t'[X]$ is not false (in Codd's 3-value
logic), then in some possible world, $t[X]=t'[X]$
must hold.  In such a possible world $t[Y]=t'[Y]$ must hold which
implies that $t[Y]=t'[Y]$ must be not false (in Codd's 3-value
logic). This prove that a strong FD implies a super-reflexive FD. 

We prove that strong implies literal.
Suppose that the FD $X \to Y$  holds strongly and that
$t[X]$ and $t'[X]$ are identical. (We assume that $X$ and $Y$ are
disjoint and $Y$ is a singleton as previously stated.) There are some 
worlds where $t[X]=t'[X]$ is true. If $t[Y]$ and $t'[Y]$ are not identical,
then in at least one of these worlds, they would differ. That is, if
one has a \nullname{} marker in $t[Y]$ and a value $a$ in $t'[Y]$
(or vice versa) we can replace the \nullname{} marker by a value
that differs from $a$. Hence, we must have that $t[Y]$ and $t'[Y]$ are
identical given that $t[X]$ and $t'[X]$ are identical. Therefore
strong FDs imply literal FDs. This completes the proof. 
 \end{proof}

We can verify that LFDs and SRFDs are strictly weaker than SFD\@. Recall
that given the schema $A,B,C$ and the FDs $A\to B$ and $B\to C$,  the
set of tuples  $(a,b, \nullname{})$ and $(c,b, \nullname{})$ violates
the strong FD $B\to C$. However, it satisfies the FD $A\to B$ and
$B\to C$ literally and super-reflexively. The fact that LFDs and  
SRFDs allow this use of \nullname{} markers support our claim that
they do not unnecessarily forbid \nullname{} markers where they might
make sense ({\bf G3}). 

\section{Computational efficiency}
  \label{sec:complexity}
The remaining question is the efficient implementation of our
proposed concepts ({\bf G4}). Here we show that both LFDs and SRFDs
can be enforced with a cost similar to enforcing \emph{regular} FDs. 

Since SQL does not enforce FDs directly, we compare with a closely
associated property, the cost of enforcing the key constraint in a
relation. This assumes that the database is in an appropriate normal
form such that enforcing keys is equivalent to enforcing FDs. 
Given relation $r$, if $X \subseteq r$ is declared as the primary key for
$r$, any insertion $t \in r$ is checked to make sure that $X$ remains
a key. In practice, relational database systems  forbid the
occurrence of two distinct tuples $t,t'$ agreeing on $X$ ($t[X]=t'[X]$
must be false). This can be achieved by creating an index
(traditionally, a tree-based index) on $X$: on inserting $t$, we
search for any $t' \in r$ with $t[X] = t'[X]$ (note that \nullname{}
markers are forbidden in a primary key). Call the set of tuples
resulting from the search $S$. When inserting, if $S = \emptyset$, the
insertion can proceed; else, for each $t' \in S$ we check whether $t'$
is identical to $t$. If this is so, the insertion can proceed; else it
is forbidden. Updates are handled in a similar manner.  
The complexity of this procedure is considered
$\mathcal{O}(\log\card{r})$ when using a tree-based index, since it is
expected that $\card{S} \leq 1$. (Of course, we can get an expected
constant time complexity by  using a hash-based index.)  

We can also enforce FDs directly, whether they are  literal or
super-reflexive.  Given relation $r$ and arbitrary FD $X \to Y$ on it,
we create an index on $X$---for concreteness, we assume a $B^*$-tree
index since these are available on almost any database system. 

To enforce FDs with an index, it suffices to check whether the
insertion of a new tuple $t$ is allowed. Indeed, an update can be
viewed as a deletion followed by an insertion, and deletions cannot
violate a FD.

 Given an index,
enforcing LFDs is not difficult. We build a single index on $X$ by
considering a concatenation of all attributes---lexicographic order is
followed, with \nullname{} markers in individual attributes kept
together at the beginning or end of their positions. That is, in an
index for attributes $ABC$,   tuple $(a_1,b_1,\nullname{})$ would go
before or after all tuples   $(a_1,b_1,c)$, for $c$ any value of $C$;
tuple $(a_1,\nullname{},b_1)$   would go before or after any tuples
$(a_1,c,b_1)$ for $c$ any value   of $C$; and so on. This can be
achieved by considering  \nullname{} markers as either strictly larger
than any other value, or strictly smaller.  Suppose that we insert a
new tuple $t$. We can find in time $\mathcal{O}(\card{S}+\log
\card{r})$ the set $S$ of  all tuples $t'$ such that $t[X]$ and
$t'[X]$ are identical:  
  
\begin{align*}S = \bigcap_{A \in X  }     \{
  t'\in r | t'[A] = t[A] \enskip \lor \enskip \text{$t'[A]$  and
    $t[A]$ are \nullname{}} 
  \}\end{align*} 
   We can then check whether $t[Y]$ and $t'[Y]$ are identical for all
   $t'$ in linear time $\mathcal{O}(\card{S})$. Note that we consider the
   cardinality of the set $X$ ($\card{X}$) to be a small constant. 
  
We can also enforce $X \to Y$ as an SRFD efficiently, though the total
cost is probably larger. Index each one of the attribute $A \in X$
using, as before, the convention that y considering  \nullname{}
markers as either strictly larger than any other value, or strictly
smaller. Given  $t$,  first we check whether $t[Y]$ contains a
\nullname{} marker (assume $Y$ is a singleton disjoint from $X$), in
which case no work is needed. Otherwise, we find all $t'$ such that
``$t[X] = t'[X]$ is not false'' by computing 
\begin{align*}S=  \bigcap_{A \in X | t[A]\text{~not \nullname{}}}
  \{t'\in r | t'[A] = t[A]  \lor  t'[A] \text{~is
    \nullname{}}\}\end{align*}  
with the convention that the result is $r$ if $t[X]$ only contains
\nullname{} markers. 
Computing the intersection $S$ between several sets $S_1, S_2, \ldots,
S_{\card{X}}$ requires only complexity $\mathcal{O}(
\sum_{i=1}^{\card{X}} |S_i| + \card{S } )$ or
better~\cite{Ding:2011:FSI:1938545.1938550}. Each set $S_i$ can be
generated in time $\mathcal{O}(\card{S_i} + \log \card{r})$ for 
an overall complexity of $\mathcal{O}( \sum_{i=1}^{\card{X}} |S_i| +
\card{S } + \log \card{r} )$.  We then consider $t'[Y]$ for all $t' \in
S$: if there is an actual value, check that $t[Y]$ is equal to
$t'[Y]$. That is, we return
\begin{align*}\bigwedge_{t' \in S} t'[Y]\ = t[Y] \enskip \lor \enskip
  t'[Y]\text{~is \nullname{}}.\end{align*} 
  This can be computed in time $\mathcal{O}(\card{S})$.

To sum up, enforcing a LFD or SRFD  under an update or insertion can
be done in time  
\begin{itemize}
\item   $\mathcal{O}(\card{S}+\log \card{r})$ or
\item $\mathcal{O}( \sum_{i=1}^{\card{X}} |S_i| + \card{S } + \log
  \card{r} )$. 
\end{itemize}
Our sketched implementations only require tree look-ups and  set
intersections, both of which are well supported by all database
systems. This is sufficient to conclude that they are computationally
practical ({\bf G4}).

\section{Extending logical database design to include \nullname{} markers}
\label{sec:enforcing}

As we discussed in \S~\ref{sec:litandsr}, it is possible to enforce
FDs with \nullname{} markers using either SRFDs or LFDs---without any
particular effort on the part of the database designer. However, we
commonly enforce FDs using logical database design. That is, we
decompose relations into normal forms and identify  keys.  

We would like to remain as close as possible to the spirit of
SQL\@. Thus, we ask whether  we can use logical design  with SRFDs and
LFDs. This would allow an extension of logical database design to
include \nullname{} markers.  Unfortunately, while both LFDs and SRFDs
fulfill all our desiderata ({\bf G1} to {\bf   G4}), SRFDs are not
compatible with conventional logical design. But we have better
luck with LFDs.

Recall that in a relation $r_1$, a set of attributes $X \subseteq
\sch(r_1)$ is a \emph{key} if the (standard) FD $X \to \sch(r_1)$ holds
and if $X$ is minimal. Moreover, in relation $r_2$, a set of
attributes $Y \subseteq \sch(r_2)$ is 
a \emph{foreign key} for $r_1$ if $\pi_{Y}(r_2) \subseteq \pi_X (r_1)$ for all
extensions of $r_1$ and $r_2$.  

A join $r = r_1 \Join_{X=Y} r_2$ is a new relation made by combining
all tuples $t_1 \in r_1$ and all tuples $t_2 \in r_2$ such that
$t_1[X]=t_2[Y]$ into a new tuple $t$ equal to $t_1$ on $\sch(r_1)$ and
equal to $t_2$ on $\sch(r_2)-Y$. A join  $r = r_1 \Join_{X=Y} r_2$ is 
lossless when $\pi_{\sch(r_1)}(r) = r_1$ and $\pi_{\sch(r_2)}(r) = r_2$.

We extend the concepts of keys and joins in the context of LFDs as follows.

\begin{itemize}\item
 We say that  $X
\subseteq \sch(r)$ is a \emph{literal
  superkey} for $r$ if $X \to Y$ holds literally for any $Y \subseteq
\sch(r)$ and a \emph{literal key} iff it is  a  minimal literal superkey. 
 A \emph{literal foreign key} is
 an integrity constraint between two relations: a set of attributes
 $X$ in relation $r_1$ must match a set of attributes $X$ in a
 relation $r_2$ such that for every tuple $t$ in $r_1$, there must be
 a tuple $t'$ in $r_2$ such that $t[X]$
 and  $t'[X]$ are identical and such that $X$ contains a literal key
 in $r_2$.     
 \item
 As in SQL, each literal  foreign key constraint supports a
corresponding join. 
The literal join of $r_1$ and $r_2$ on $X$, a
foreign key in $r_1$, noted $\literaljoin{}$
is defined as follows: given any two tuples $t_1, t_2$ from $r_1, r_2$
such that $t_1[X]$ and $t_2[X]$ are
identical, we generate the tuple $t$ such that $ t[A] = t_1[A] $ for
all $A \in \sch(r_1)$ and $t[A] = t_2[A]$ for all $A \in
\sch(r_2)-\sch(r1)$. 
\end{itemize}

With these definitions, we have that lossless joins are supported,
even with \nullname{} markers. Formally, given relation $r$  with $\sch(r)= Z \cup W$ and such that $Z \cap W \to W$ holds literally,  $\literaljoin{}$ is a \emph{lossless join}:
$r= \pi_Z(r)  \literaljoin{} \pi_W(r)$.

\begin{proposition}(Lossless join) If $\sch(r)= Z \cup W$ and  $Z \cap W \to W$ holds literally then 
$r= \pi_Z(r)  \literaljoin{} \pi_W(r)$.
\end{proposition}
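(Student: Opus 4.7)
The plan is to establish the two set inclusions $r \subseteq \pi_Z(r) \literaljoin \pi_W(r)$ and $\pi_Z(r) \literaljoin \pi_W(r) \subseteq r$ separately, mirroring the classical lossless-join argument but replacing ordinary equality by the identical-ness relation used in Definition~\ref{def:LFD}. Throughout, let $X = Z \cap W$.

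For the forward inclusion, I would pick an arbitrary $t \in r$ and exhibit it as the result of the literal join of its own projections. Set $t_1 = t[Z] \in \pi_Z(r)$ and $t_2 = t[W] \in \pi_W(r)$. Since $t_1$ and $t_2$ are both obtained from the single tuple $t$, the values $t_1[X]$ and $t_2[X]$ are literally the same markers/values and hence identical in the sense defined before Definition~\ref{def:LFD}. By the construction rule of $\literaljoin$, $t_1$ and $t_2$ combine into a tuple $t'$ with $t'[A] = t_1[A] = t[A]$ for $A \in Z$ and $t'[A] = t_2[A] = t[A]$ for $A \in W \setminus Z$. Thus $t' = t$, so $t \in \pi_Z(r) \literaljoin \pi_W(r)$.

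For the reverse inclusion, take any $t \in \pi_Z(r) \literaljoin \pi_W(r)$. By definition of $\literaljoin$, there exist $t_1 \in \pi_Z(r)$ and $t_2 \in \pi_W(r)$ whose $X$-components are identical, and from which $t$ is formed. Unpacking the projections, pick $s_1, s_2 \in r$ with $s_1[Z] = t_1$ and $s_2[W] = t_2$. The key step is to upgrade the agreement on $X$ from $t_1, t_2$ to the witnesses $s_1, s_2$: since $s_1[X] = t_1[X]$ is identical to $t_2[X] = s_2[X]$, the literal FD $X \to W$ forces $s_1[W]$ and $s_2[W]$ to be identical. Therefore $s_1[W \setminus Z]$ is identical to $s_2[W \setminus Z] = t[W \setminus Z]$, while $s_1[Z] = t_1 = t[Z]$; concluding $s_1 = t$, and hence $t \in r$.

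The only subtlety, and the step I expect to require the most care, is justifying the very last equality $s_1 = t$ when null markers are present: one must argue at the level of individual attributes that ``identical on $W$'' combined with ``equal on $Z$'' yields identity of the whole tuple with $t$ (not merely a tuple that is identical to $t$ in some weaker sense). This is immediate from the definition of identity on sets of attributes, but it is the place where the literal reading of the FD is essential: a super-reflexive or weak variant would only give a match-up to ``not false'' and would not allow us to conclude that the particular null markers of $s_1$ and $s_2$ on $W$ coincide, which is exactly what is needed to place $t$ back inside $r$.
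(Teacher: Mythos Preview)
Your proposal is correct and follows essentially the same two-inclusion strategy as the paper's proof, with the literal FD $Z\cap W \to W$ doing the work in the reverse inclusion. If anything, your version is slightly more careful: you apply the FD to two witnesses $s_1,s_2\in r$, whereas the paper applies it directly to the join tuple $t$ and a single witness $t'\in r$, tacitly relying on the existence of a second witness in $r$ for $t[W]$.
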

\begin{proof}
For the purpose of the literal join, the \nullname{}~marker can 
be treated like any other value. That is, the set of values $V$ extended with the \nullname{}~marker is effectively reflexive, symmetric and transitive. To conclude the proof, we  
have to show that $r=\pi_Z(r)  \literaljoin{} \pi_W(r)$.
\begin{enumerate}
\item Suppose that $t \in r$. There will be a tuple 
 $t^{(Z)}$ in $\pi_Z(r)$ such that $t[Z]$ and 
$t^{(Z)}$ are identical. Similarly, there will be a tuple
 $t^{(W)}$ in $\pi_W(r)$ such that $t[W]$ and 
$t^{(W)}$ are identical. We have that $t^{(W)}[Z \cap W]$ is identical to  $t^{(Z)}[Z \cap W]$. Thus we have that $t \in \pi_Z(r)  \literaljoin{} \pi_W(r)$.
\item Suppose that $t \in \pi_Z(r)  \literaljoin{} \pi_W(r)$.
There must be $t' \in r$ such that $t'[Z]=t[Z]$. We have that $t[Z \cap W]$ and
$t'[Z \cap W]$ must be identical because $Z \cap W \subset Z$. Because $Z \cap W \to W$, we have that $t[W]$ and $t'[W]$ are identical. Thus we have that $t[Z \cup W]$ and $t'[Z \cup W]$ which shows that $t \in r$.

\end{enumerate}
This concludes the proof.
\end{proof}

 In fact, we can show that logical
design is sound over LFDs: as long as we can normalize a relation in
the conventional sense, then we can normalize it in the sense of
LFDs. 

\section{Conclusion and Further Research}
\label{sec:conclusion}

We have reviewed the concept of FD in the presence of \nullname{}
markers, and have specified a set of properties that we believe any
definition of the concept should satisfy. We have proposed two new definitions of what it means for an
FD to hold in this situation for which the properties do hold: 
our definitions satisfy Armstrong's axioms for relations with \nullname{}
markers, allow \nullname{} markers to be used in practice (not only with contrived examples), and at the same time allows those
\nullname{} markers to be updated to real values consistently. These
FDs can also be enforced efficiently in computational terms  
despite \nullname{} markers. Both definitions have slightly different
properties (LFDs enforce lossless join, in addition to database
consistency), but they both satisfy our axioms.   
 
Clearly, our requirements are tied to our goal of obtaining a
definition that can be used in practical situations. 
An open question is whether the properties put forth here as desirable
are the only ones---or at least the \emph{important} ones, in some sense of
\emph{important}. We believe that our properties satisfy
intuitions that make the concept of FD usable in practical
situations. However,  additional properties should be explored to get
a better understanding of the \emph{desired} or \emph{expected} behavior of FDs
in the presence of \nullname{} markers; perhaps a set of alternative
properties, each giving raise to a concept of FD, can be developed for
different scenarios.
Agreeing on some set or sets of basic requirements would
provide researchers with an explicit milestone by which to judge
different formalizations. 

A narrower question is whether the definitions proposed here are the
only ones that can satisfy all the requirements put forth, or whether
alternatives exist. While the authors have considered many alternative definitions, and found them missing some requirement,
it is not known whether other definitions could exist that would still
satisfy all properties, and if so, what would the relationships
between different definitions be. 

\section{Funding} 

This work was supported by the Natural Sciences and Engineering Research Council of Canada [26143].

\section{Acknowledgements} 

We thank Carles~Farr\'e and Andre Vellino for their helpful comments.

\bibliographystyle{compj}
\bibliography{nullfd}

\end{document}